\let\svthefootnote\thefootnote
\newcommand\freefootnote[1]{%
  \let\thefootnote\relax%
  \footnotetext{#1}%
  \let\thefootnote\svthefootnote%
}
\theoremstyle{plain}
\newtheorem{thm}{Theorem}[section]
\newtheorem{prop}[thm]{Proposition}
\theoremstyle{definition}
\newtheorem{defn}{Definition}[section]
\newcommand{\kw}[1]{{\bfseries #1}}
\newcommand{\inner}[2]{\left\langle #1 , #2 \right\rangle}
\newcommand{\ssize}{|\Sigma|} 
\newcommand{\ssizeopt}{} 
\DeclareMathOperator{\vol}{\textnormal{Vol}}
\title{Voronoi Density Estimator for High-Dimensional Data: \\ Computation, Compactification and Convergence}
\author[1]{{\href{mailto:<vpol@kth.se>?Subject=Your UAI 2022 paper}{Vladislav Polianskii*}}}
\author[1]{{\href{mailto:<glma@kth.se>?Subject=Your UAI 2022 paper}{Giovanni Luca Marchetti*}}}
\author[1]{Alexander Kravberg}
\author[ ]{Anastasiia Varava}
\author[1]{Florian~T.~Pokorny}
\author[1]{Danica Kragic}
\affil[1]{%
    School of Electrical Engineering and Computer Science, Royal Institute of Technology (KTH)\\
    Stockholm, Sweden
}
\begin{document}
\maketitle
    
\begin{abstract}
The Voronoi Density Estimator (VDE) is an established density estimation technique that adapts to the local geometry of data. However, its applicability has been so far limited to problems in two and three dimensions. This is because Voronoi cells rapidly increase in complexity as dimensions grow, making the necessary explicit computations infeasible. We define a variant of the VDE deemed Compactified Voronoi Density Estimator (CVDE), suitable for higher dimensions. We propose computationally efficient algorithms for numerical approximation of the CVDE and formally prove convergence of the estimated density to the original one. We implement and empirically validate the CVDE through a comparison with the Kernel Density Estimator (KDE). 
Our results indicate that
the CVDE and the KDE are comparable at their
best performance and that the CVDE surpasses the
KDE under arbitrary bandwidth selection. 
\end{abstract}

\section{INTRODUCTION}\label{intro}
\freefootnote{*Equal contribution.}

 Given a discrete set of data sampled from an unknown probability distribution, the aim  of density estimation is to recover the underlying Probability Density Function (PDF) (\cite{pointpatterns, scottdensity}). Non-parametric methods achieve this by directly computing the PDF through a closed formula, avoiding the potentially expensive need of searching for optimal parameters.

One of the most common non-parametric density estimation techniques is the Kernel Density Estimator (KDE; \cite{kdebook}). The resulting PDF is a convolution between a fixed kernel and the discrete distribution of samples. In case of the Gaussian kernel, this corresponds to a mixture density with a Gaussian distribution centered at each sample. Another popular density estimator, more commonly used for visualization purposes is given by histograms (\cite{histograms}), which depend on a prior tessellation of the ambient space (typically, a grid). The estimation is piece-wise constant and is obtained by the number of samples falling in each cell normalised by its volume. 

\begin{figure}[t]
\centering
\includegraphics[width=.8\columnwidth]{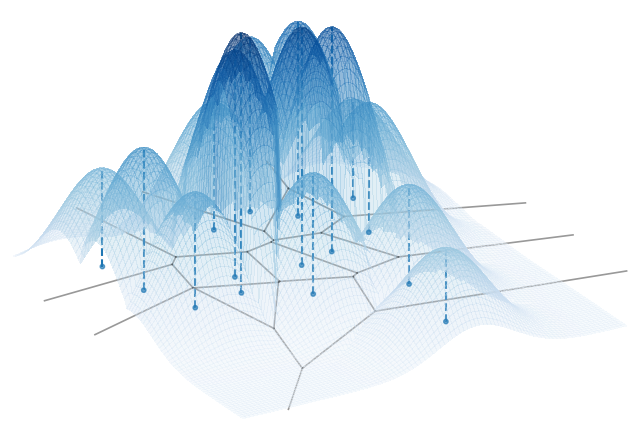}
\caption{Graph of a density estimated by the CVDE, with the Voronoi tessellation underneath.}\label{firstpagefigure}
\end{figure}

A common limitation of the aforementioned methods is a bias towards a fixed local geometry. Namely, estimates through KDE near a sample are governed by the level sets of the chosen kernel. In the Gaussian case, such level sets are ellipsoids of high estimated probability. Histograms suffer from an analogous bias towards the geometry of the cells of the tessellation (i.e., the bins of the histograms), on which the estimated PDF is constant. The issue of geometrical bias severely manifests when considering real-world high-dimensional data. Indeed, one cannot expect to approximate the rich local geometries of complex data with a simple fixed one. Both the estimators come with hyperparameters controlling the scale of the local geometries which require tuning. This amounts to the bandwidth for KDE and the diameter of the cells for histograms. 

The \emph{Voronoi Density Estimator} (VDE) has been suggested to tackle the challenges discussed above (\cite{ord}). By considering the Voronoi tessellation generated by data (\cite{voronoibook}), the estimated PDF is piece-wise constant on the cells and proportional to their inverse volume. The Voronoi tessellation adapts local polytopes so that each datapoint is equally likely to be the closest when sampling from the resulting PDF. This has enabled successful application of the VDE to geometrically articulated real-world distributions in lower dimensions (\cite{voronoineuronal, voronoiphotons, voronoiastronomy}).

The goal of the present work is to enable the VDE for high-dimensional scenarios. Although the VDE constitutes a promising candidate due to its local adaptivity, the following aspects have to be addressed: 

 \noindent
{\bf Computation}. The Voronoi cells are arbitrary convex polytopes and their volume is thus challenging to compute explicitly, which yields the necessity for fast approximate computations. 
\noindent

{\bf Compactification}. Data is often concentrated around low-dimensional submanifolds, which makes most of the ambient space empty and several Voronoi cells unbounded, i.e. of infinite volume (see Figure \ref{parabola}). One still needs to produce a finite estimate on those cells, a process we refer to as `compactification'. 
   

    We propose solutions to the problems above. First, we present efficient algorithmic procedures for volume computation and sampling from the estimated density. We formulate the cell volumes as integrals over a sphere, which can then be approximated by Monte Carlo methods. Furthermore, we propose a sampling procedure for the distribution estimated by the VDE. This consists in randomly traversing the Voronoi cells via a `hit-and-run' Markov chain (\cite{hitnrun}).
     The proposed algorithms are highly parallelizable, allowing efficient computations on the GPU.
     %

    In order to compactify the cells, we place a finite measure on each of them by means of a fixed kernel (typically, a Gaussian one), leading to an altered version of the VDE which we refer to as \emph{Compactified Voronoi Density Estimator} (CVDE). Figure \ref{firstpagefigure} shows an example of an estimate by the CVDE on a simple two-dimensional dataset. All the computational and sampling procedures naturally extend to the CVDE.  

A further contribution of the present work is a theoretical proof of \textbf{convergence} for the CVDE. Assuming the original density has support in the whole ambient space, we show that the PDF estimated by the CVDE converges (with respect to an appropriate notion for random measures) to the ground-truth one as the number of datapoints increases. The convergence holds without any continuity assumptions on the ground-truth PDF nor on the kernel and does not require the kernel bandwidth to vanish asymptotically. This is in contrast with the convergence properties of the KDE. Due to the aforementioned local geometric bias of the KDE, the bandwidth has to decrease at an appropriate rate in order to amend for the local influence of the kernel and guarantee convergence to the underlying distribution (\cite{devroye1979l1, jiang2017uniform}). 

Finally, we implement the CVDE in $ C\texttt{++}$ and parallelize computations via the OpenCL framework. Our code, with a provided Python interface, is 
publicly available at \\ \url{https://github.com/vlpolyansky/cvde}.

\section{COMPACTIFIED VORONOI DENSITY ESTIMATOR}\label{method}
This section presents Voronoi cell compactification and Compactified Voronoi Density Estimator, CVDE. We begin by defining the Voronoi tessellations in a general setting (see \cite{voronoibook} for a comprehensive treatment). Suppose that $(X, d)$ is a connected metric space and $P \subseteq X$ is a finite collection of distinct points referred to as \emph{generators}.  \\

\begin{defn}
The \emph{Voronoi cell}\footnote{Sometimes referred to as \emph{Dirichlet cell}.} of $p \in P$ is defined as
\begin{equation}
C(p) = \{ x\in X \ | \ \forall q \in P \  d(x,q) \geq d(x,p) \}.
\end{equation}

\end{defn}

The Voronoi cells intersect at the boundary and cover the ambient space $X$. The collection $\{C(p) \}_{p \in P}$ is called \emph{Voronoi tessellation} generated by $P$. For a point $x \in X$ not on the boundary of any cell, we write $C(x)$ for the unique cell containing it. When $X=\mathbb{R}^n$ with Euclidean distance, the Voronoi cells are convex $n$-dimensional polytopes which are possibly unbounded. 

Assume now that $X$ is equipped with a finite Borel measure denoted by $\vol$. An additional technical condition is that the boundaries of the Voronoi cells have vanishing measure. \\

\begin{figure*}[tbh!]
    \centering
    \begin{subfigure}[b]{.35\linewidth}
        \centering
        \includegraphics[width=\linewidth]{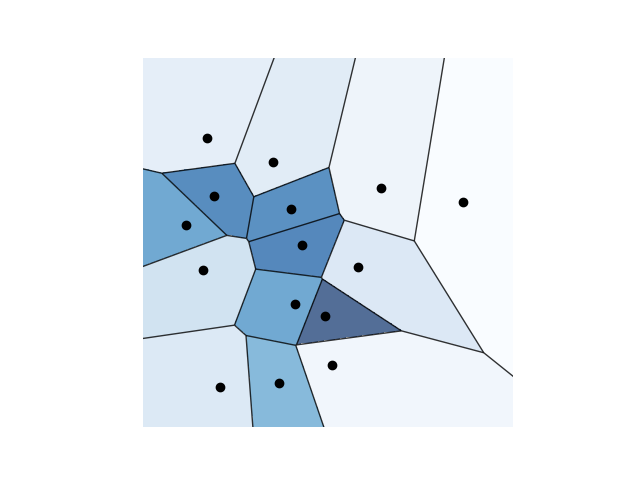}
        \subcaption*{VDE with bounding square A}
    \end{subfigure}
    \begin{subfigure}[b]{.35\linewidth}
        \centering
        \includegraphics[width=\linewidth]{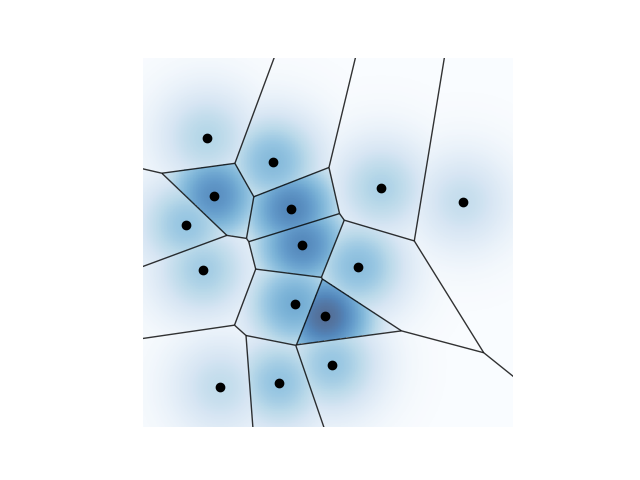}
        \subcaption*{CVDE with Gaussian kernel}
    \end{subfigure}
    \caption{Comparison between VDE and CVDE for generators in the plane. A darker color represents higher estimated density.}
    \label{fig:samples}
\end{figure*}

\begin{defn}
The \emph{Voronoi Density  Estimator } (VDE) at a point $x \in X$ is defined almost everywhere as 
\begin{equation}
\widetilde{f}(x) = \frac{1}{|P| \vol(C(x))}
\end{equation}
where $|\cdot|$ denotes cardinality.
\end{defn}

The function $\widetilde{f}$ defines a locally constant PDF on $X$ and thus a probability measure $\widetilde{f} \vol$. With respect to this distribution the cells are equally likely, and the  restriction to each cell coincides with the normalisation of $\vol$. 

We focus on the case where $X=\mathbb{R}^n$ equipped with Euclidean distance. One major issue for the choice of $\vol$ is that the standard Lebesgue measure does not satisfy the finiteness requirement. A common solution in the literature is to restrict the measure to a fixed bounded region $A \subseteq \mathbb{R}^n$ containing $P$ (\cite{voronoiresample, voronoiplanar}), which is equivalent to setting $X=A$ as the ambient space. However, this results in an often unsuitable solution for high-dimensional data. Under the manifold hypothesis (\cite{manifoldhyp}), data are concentrated around a submanifold with high codimension which implies that most of $\mathbb{R}^n$ falls outside the support. Moreover, the cells of the points lying at the boundary of the convex hull of data, which constitute the majority of cells for such submanifolds, are unbounded (see Figure \ref{parabola}). Estimating the density as uniform, after eventually intersecting with the bounded region $A$, becomes thus unreasonable and 
heavily relies on the a priori choice of $A$. 

 \begin{figure}[h!]
\centering
\includegraphics[width=.5\columnwidth]{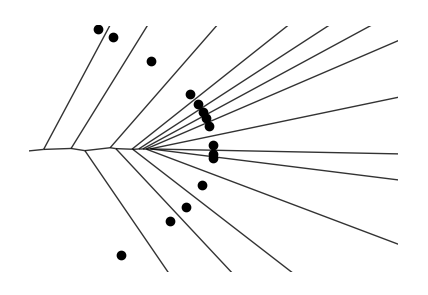}

\caption{Voronoi tessellation for generators distributed on a submanifold (a parabola). In this case, all the Voronoi cells are unbounded and the VDE is strongly biased by the choice of the bounding region $A$. }\label{parabola}
\end{figure}

We instead take a different route. The idea is to make the measure of each cell finite ('compactify') by considering a \emph{local} distribution with mode at the corresponding generator in $P$. In general terms, we fix a positive kernel $K: \ \mathbb{R}^n \times \mathbb{R}^n \rightarrow \mathbb{R}_{\geq 0} $ which is at least integrable in the second variable and define the following: 

\begin{defn}
The \emph{Compactified Voronoi Density Estimator} (CVDE) at a point $x \in \mathbb{R}^n$ is defined almost everywhere as
\begin{equation}\label{defcvde}
 f(x) =   \frac{K(p,x)}{|P| \vol_p(C(x))}
\end{equation}
where $\vol_p(C(x))=  \int_{C(x)} K(p,y) \ \textnormal{d}y$ and $p$ is the generator of $C(x)$ i.e., the generator $p \in P$ closest to $x$. 
\end{defn}

In practice, a commonly considered kernel is the Gaussian one
\begin{equation}\label{gaussiankernel}
K(p,x) = e^{-\frac{\| p-x\|^2}{2h^2}}
\end{equation}
where $h \in \mathbb{R}_{>0}$ is a hyperparameter referred to as 'bandwidth'. More generally, with abuse of notation a kernel can be constructed from an arbitrary integrable map $K \in L^1(\mathbb{R}^n)$:

\begin{equation}\label{convolutional}
K(p,x) = K\left(\frac{p-x}{h}\right).
\end{equation}

Note that the VDE with a bounding region $A$ corresponds to the particular case of the CVDE with the characteristic function of $A$ as kernel i.e., $K(p,x) = \chi_A(x)$. Figure \ref{fig:samples} shows a comparison between the VDE and the Gaussian CVDE on a simple two-dimensional dataset. 

It is worth to briefly compare the CVDE to the Kernel Density Estimator (KDE). Recall that the KDE with kernel $K$ (which is assumed to integrate to $1$ in the second variable) is given by $\frac{1}{| P|}\sum_{p }K(p,x)$. The kernel is aggregated over all the generators, which can possibly oversmooth the estimation. In contrast, the CVDE $f(x)$  involves $K$ evaluated at the closest generator alone. Furthermore, assume that all the cells have the same local volume i.e, $\vol_p(C(p))=1$ for all $p \in P$, and that $K$ monotonically decreases with respect to the distance i.e., $K(p,x) \leq K(p', x)$ when $d(p,x) \geq d(p',x)$. Then the CVDE reduces to  
\begin{equation}
    f(x) = \frac{1}{|P|}\max_{p \in P } K(p,x)
\end{equation} 
which is a variant of the KDE where the sum gets replaced by a maximum. Such distributions are sometimes referred to as  `max-mixtures' (\cite{maxmixture}). An empirical comparison with KDE is presented in our experimental section (Section \ref{expkde}).

 \begin{figure}[h!]
\centering
\includegraphics[width=.55\columnwidth]{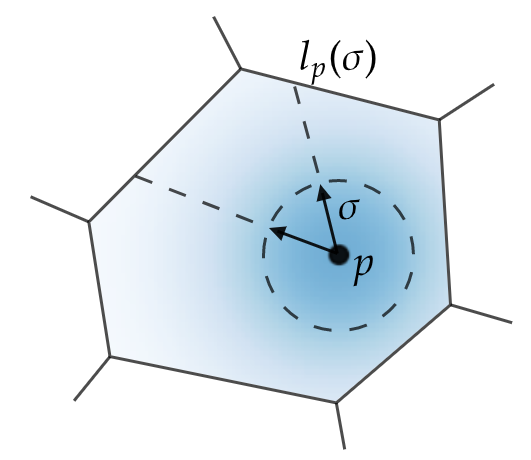}

\caption{An illustration of the directional radius involved in volume estimation and sampling.}\label{integfigure}
\end{figure}

\section{ALGORITHMIC PROCEDURES}

The CVDE presents a number of computational challenges in high dimensions ($n \gg 3$) due to the increasing geometric complexity of Voronoi tessellations. We propose to deploy raycasting methods on polytopes which reduce the problem to one-dimensional subspaces. In the context of Voronoi tessellations raycasting has been considered to explore the boundaries of the cells in (\cite{mitchell2018spoke}), which has led to a US Patent (\cite{ebeida2019generating}), as well as in (\cite{polianskii2020voronoi}). We utilize these techniques for volume computation and point sampling, and improve the time complexity through pre-computations and parallelization.

We first introduce an algebraic quantity necessary for the subsequent methods. Consider an arbitrary versor $\sigma$ and a point $z \in \mathbb{R}^n$. Define $l_z(\sigma)$ as the maximum $t$ such that $z + t\sigma$ is contained in $C(z)$, and $l_z(\sigma)=\infty$ if such $t$ does not exist. We refer to this value as a \emph{directional radius}, originating at $z$ in the direction $\sigma$ (see Figure \ref{integfigure}). The directional radius can be expressed via a closed and computable formula. Denote by $p$ the generator closest to $z$ and for $q \in P \setminus \{p\}$, set
\begin{equation}\label{eq:lqz}
    l^q_z(\sigma) = \frac{\| q-z\|^2 - \| p - z\|^2}{2 \langle \sigma, q-p \rangle }. 
\end{equation}
As shown in (\cite{polianskii2019voronoi}), the directional radius is given by
\begin{equation}\label{eq:lz}
    l_z(\sigma) = \min_{q \not = p, \ l^q_z(\sigma) \geq 0} l^q_z(\sigma)
\end{equation}
with $l_z(\sigma) =\infty$ if $l^q_z(\sigma)$ is negative for all $q$.

\subsection{Volume Estimation and Sampling}
We now present a way to efficiently compute the (local) volumes $\vol_p$ via spherical integration. Such an approach to integration over high-dimensional Voronoi tessellations has been explored in the past by (\cite{winovich2019rigorous}) and (\cite{polianskii2019voronoi}).

Assume that the kernel is as in Equation \ref{convolutional} for a continuous $K$. By a change of variables into spherical coordinates centered at $p$ and due to convexity of $C(p)$, the volumes can be rewritten as an integral over the unit sphere $\mathbb{S}^{n-1} \subseteq \mathbb{R}^n$:
\begin{equation}\label{sphericalintegral}
\vol_p = \int_{\mathbb{S}^{n-1}}  \int_{[0, l_p(\sigma)]} K(t\sigma)t^{n-1}   \textnormal{d}t \textnormal{d}\sigma
\end{equation}
where $l_p(\sigma)$ is the directional radius of the cell originating from its generator ($z=p$). The spherical integral can be computed via Monte Carlo approximation by sampling a finite set of \ssizeopt versors $ \Sigma_p \subseteq \mathbb{S}^{n-1}$ uniformly and estimating the empirical average
\begin{equation}\label{sphericalequation}
    \frac{2 \pi^{\frac{n}{2}}}{|\Sigma_p | \Gamma(\frac{n}{2})} \sum_{\sigma \in \Sigma_p}  \int_{[0, l_p(\sigma)]} K(t\sigma)t^{n-1}   \textnormal{d}t
\end{equation}
where $\Gamma$ denotes Euler's Gamma function. In the case of Gaussian kernel (Equation \ref{gaussiankernel}), by bringing the constant $\vol (\mathbb{S}^{n-1}) = \frac{2 \pi^{\frac{n}{2}}}{\Gamma(\frac{n}{2})}$ under the summation the summand simplifies to  $
    (2\pi h^2)^{\frac{n}{2}} \overline{\gamma}\left(\frac{n}{2}, \ l_p(\sigma) \right) $, 
where $\overline{\gamma}$ denotes the regularized lower incomplete Gamma function $\overline{\gamma}(a, z) = \frac{1}{\Gamma(a)}\int_0^z{t^{a-1}e^{-t}\text{d}t}$.

Next, we propose a sampling procedure for the CVDE which is a version of the \emph{hit-and-run} sampling for distributions on higher-dimensional polytopes (\cite{hitnrun}). It consists in first choosing a generator $p = z^{(0)} \in P$ uniformly. Then, one traverses the cell $C(p)$  by constructing a Markov chain $\{z^{(i)} \}$ in the following way. A random versor $\sigma^{(i+1)} \in \mathbb{S}^{n-1}$ is sampled uniformly and the next point $z^{(i+1)}$ is sampled from $\frac{1}{\vol_p}K(p, \cdot)$ restricted to the segment $\{z^{(i)} + t\sigma^{(i+1)} \ | \ t \in [-l_{z^{(i)}}(-\sigma^{(i+1)}), \ l_{z^{(i)}}(\sigma^{(i+1)})]\}$. As shown by \cite{hitnrun}, the Markov chain converges w.r.t. total variation distance to the underlying distribution $\frac{1}{\vol_p}K(p, \cdot)$ over $C(p)$. In practice, one terminates the sampling process after a number $I$ of steps returning the last point $z^{(I)}$. Figure \ref{figurehitnrun} shows an instance of hit-and-run on a simple two-dimensional dataset.

\subsection{Computational Complexity}
The computational optimizations deserve a separate discussion. As seen from Equations \ref{eq:lz} and \ref{eq:lqz}, the natural way of estimating the directional radius $l_z(\sigma)$ for given $z\in \mathbb{R}^n$ and $\sigma \in \mathbb{S}^{n-1}$ would require $O(n|P|)$ numerical operations. This would bring the overall computational cost to $O(n\max_p|\Sigma_p||P|^2)$ for the spherical integrals and to $O(n|P|I)$ for a sampling run with $I$ hit-and-run steps. 

In order to optimize the algorithms, we first rewrite Equation \ref{eq:lqz} as 
\begin{equation}\label{eq:lqz-rewritten}
    l^q_z(\sigma) = \frac{\inner{q}{q} - \inner{p}{p} - 2\inner{z}{q} + 2\inner{z}{p}}{2\inner{\sigma}{q} - 2\inner{\sigma}{p}}. 
\end{equation}
 In spherical integration, we deploy the same set of \ssizeopt versors $\Sigma=\Sigma_p \subset \mathbb{S}^{n-1}$ for all the generators. This allows to pre-compute $\inner{q}{p}$ and $\inner{\sigma}{p}$ for all $p, q \in P, \sigma \in \Sigma$, achieving a total computational complexity of $O(n|P|^2 + n|\Sigma||P| + |\Sigma||P|^2)$.

For the sampling procedure, we similarly fix a prior finite set $\Sigma$ of all available versors. This does not affect the convergence property of the hit-and-run Markov chain assuming $\Sigma$ linearly spans $\mathbb{R}^n$ (\cite{belisle1993hit}). While $\langle \sigma, p \rangle$ and $\inner{q}{p}$ can be pre-computed in $O(n|P|^2 + n\ssize|P|)$ time, the terms involving $z$ in Equation \ref{eq:lqz-rewritten} require more care. To that end, the $i$-th step of the hit-and-run Markov chain is given by $z^{(i)} = z^{(i-1)} + t^{(i-1)} \sigma^{(i-1)}$ for appropriately sampled $t^{(i-1)}, \sigma^{(i-1)}$. The term $\langle z, p \rangle$ can then be updated inductively in $O(1)$ as $\inner{z^{(i)}}{p} = \inner{z^{(i-1)}}{p} + t^{(i-1)}\inner{\sigma^{(i-1)}}{p}$. Summing up, the cost of a hit-and-run Markov chain run reduces to $O((\ssize + |P|)I)$, which does not depend on the space dimensionality $n$ multiplicatively. 

Algorithms \ref{alg:sph-int} and \ref{alg:hitnrun} provide a more detailed description of volume computation and point sampling via the hit-and-run procedure respectively, including the discussed optimizations. Note that the loops in both algorithms are independent and involve elementary algebraic operations. This allows to utilize GPU capabilities, which also significantly boosts the computation performance.

\begin{algorithm}[t]
    \caption{$\vol_p$ computation with Gaussian kernel}
    \label{alg:sph-int}
    \begin{algorithmic}        
    \footnotesize
        \Require $P \subset \mathbb{R}^n$ set of generators \\
        $\Sigma \subset \mathbb{S}^{n-1}$ set of \ssizeopt versors
        \Ensure $\vol_p$ for all $p\in P$
        \State Compute $\inner{q}{p}$ \kw{for all} $(q, p) \in P \times P$
        \State Compute $\inner{\sigma}{p}$ \kw{for all} $(\sigma, p) \in \Sigma \times P$
        \ForAll{$p \in P$}
            \State Initialize $\vol_p \leftarrow 0$
            \ForAll{$\sigma \in \Sigma$}
                \State Initialize $l_p(\sigma) \leftarrow \infty$
                \ForAll{$q \in P \setminus \{p\}$}
                    \State $l^q_p(\sigma) \leftarrow \frac{\inner{q}{q} - 2\inner{q}{p} + \inner{p}{p}}{2\inner{\sigma}{q} - 2\inner{\sigma}{p}}$
                    \If{$l^q_p(\sigma) > 0$}
                        \State $l_p(\sigma) \leftarrow \text{min}\{ l_p(\sigma), l^q_p(\sigma) \}$
                    \EndIf
                \EndFor
                \State $\vol_p \leftarrow \vol_p + \ssize^{-1}{\left(2\pi h^2\right)^{\frac{n}{2}}}\ \overline{\gamma}\left(\frac{n}{2}, \ l_p(\sigma)\right)$
            \EndFor
        \EndFor
    \end{algorithmic}
\end{algorithm}

\begin{algorithm}[t]
    \caption{CVDE sampling}
    \label{alg:hitnrun}
    \begin{algorithmic}
    \footnotesize
        \Require $P \subset \mathbb{R}^n$ set of generators
        \State $\Sigma \subset \mathbb{S}^{n-1}$ set of \ssizeopt versors
        \State $m$ desired number of samples
        \State $I$ number of hit-and-run steps
        \Ensure $Z = Z^{(I)} \subset \mathbb{R}^n$ samples from CVDE
        \State Initialize $Z^{(0)} \sim \text{Uni}^{m}(P)$ 
        \State Compute $\langle p, p \rangle$ \kw{for all} $p \in P$
        \State Compute $\langle z, p \rangle$ \kw{for all} $(z, p) \in Z^{(0)} \times P$
        \State Compute $\langle \sigma, p \rangle$ \kw{for all} $(\sigma, p) \in \Sigma \times P$
        \For{$i=1$ \kw{to} $I$}
            \ForAll{$z  \in Z^{(i-1)}$}
                \State $\sigma \leftarrow \text{Uni}(\Sigma)$, $p \leftarrow z^{(0)}$
                \State Initialize $l_z(-\sigma) \leftarrow \infty$, $l_z(\sigma) \leftarrow \infty$
                \ForAll{$q \in P \setminus \{p\}$}
                    \State $l^q_z(\sigma) \leftarrow \frac{\inner{q}{q} - \inner{p}{p} - 2\inner{z}{q} + 2\inner{z}{p}}{2\inner{\sigma}{q} - 2\inner{\sigma}{p}}$
                    \If{$l^q_z(\sigma) > 0$}
                        \State $l_z(\sigma) \leftarrow \text{min}\{ l_z(\sigma), l^q_z(\sigma)\}$
                    \Else
                        \State $l_z(-\sigma) \leftarrow \text{min}\{l_z(-\sigma), -l^q_z(\sigma)\}$
                    \EndIf
                \EndFor
                \State Sample $t \in [-l_{z}(-\sigma), \ l_{z}(\sigma)] $ 
                \State Add $z + t \sigma$ to $Z^{(i)}$
                \State Update $\inner{z}{p} \leftarrow \inner{z}{p} + t\inner{\sigma}{p}$ \kw{for all} $p \in P$
            \EndFor
        \EndFor
    \end{algorithmic}
\end{algorithm}

 \begin{figure}[h!]
\centering
\includegraphics[width=.8\columnwidth]{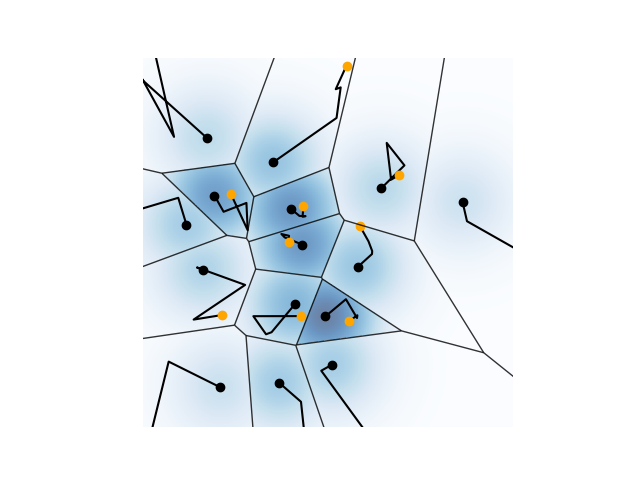}
\caption{An illustration of the hit-and-run sampling procedure, with a trajectory of length $I=4$ for each generator. The sampled points are displayed in orange.}\label{figurehitnrun}
\end{figure}

\section{THEORETHICAL PROPERTIES}
\subsection{Convergence}
We now discuss the convergence of the CVDE when the set $P$ of generators is sampled from an underlying distribution. Suppose thus that there is an absolutely continuous probability measure $\mathbb{P} = \rho \textnormal{d}x$ on $\mathbb{R}^n$ defined by a density $\rho \in L^1(\mathbb{R}^n)$. When $P$ is sampled from $\mathbb{P}$ the CVDE can be considered as (the density of) a random probability measure. We denote by $\mathbb{P}_m$ this random measure when the number of generators is $m$ i.e.,  $\mathbb{P}_m = f \textnormal{d}x $ for $P \sim \rho^m$.  

The following is our main theoretical result. It guarantees that $\mathbb{P}_m$ converges to $\mathbb{P}$ with respect to a canonical notion of convergence for random measures, assuming $\rho$ has full support. 

\begin{thm}\label{thmbody}
Suppose that $\rho$ has support in the whole $\mathbb{R}^n$. For any $K \in L^1(\mathbb{R}^n \times \mathbb{R}^n)$ the sequence of random probability measures $\mathbb{P}_m$ converges to $\mathbb{P}$ in distribution w.r.t. $x$ and in probability w.r.t. $P$. Namely, for any measurable set $E \subseteq \mathbb{R}^n$ the sequence $\mathbb{P}_m(E)$ of random variables converges in probability to the constant $\mathbb{P}(E)$.
\end{thm}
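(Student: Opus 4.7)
The plan is to reduce the random measure $\mathbb{P}_m$ asymptotically to the empirical measure $\hat\rho_m = \frac{1}{m}\sum_{p \in P}\delta_p$ and then invoke the law of large numbers. Decomposing the CVDE over the Voronoi tessellation yields
\[
\mathbb{P}_m(E) = \frac{1}{m}\sum_{p\in P} g_m(p),\qquad g_m(p) = \frac{\int_{C(p)\cap E}K(p,y)\,\mathrm dy}{\int_{C(p)}K(p,y)\,\mathrm dy},
\]
that is, $g_m(p)$ is the probability that a sample from the normalised kernel on $C(p)$ lands in $E$. Since the generators are i.i.d.\ from $\rho$, the strong law of large numbers gives $\hat\rho_m(E) = \frac{1}{m}\sum_{p\in P}\mathbf{1}_E(p) \to \mathbb{P}(E)$ almost surely, so it suffices to show that the discrepancy $\frac{1}{m}\sum_{p\in P}(g_m(p)-\mathbf{1}_E(p))$ tends to zero in probability.

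The first key step is a cell shrinking lemma. Because $\rho$ has full support on $\mathbb{R}^n$, any ball $B(x,\delta)$ has strictly positive $\rho$-mass and therefore contains at least one generator with probability tending to one as $m\to\infty$. Covering a neighbourhood of a typical point $x$ by finitely many such balls, an elementary geometric argument using the directional radius $l_p(\sigma)$ of Equation~\ref{eq:lz} gives a uniform-in-$\sigma$ bound forcing $C(p)\subseteq B(p,\epsilon)$ with probability tending to one. On this high-probability event the normalised kernel $K(p,\cdot)|_{C(p)}/\vol_p(C(p))$ is supported in $B(p,\epsilon)$ and hence concentrates at $p$ as $\epsilon\to 0$.

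The second step combines this localisation with $\mathbb{P}$-continuity of $E$ (that is, $\mathbb{P}(\partial E)=0$, which is the regime in which weak convergence yields $\mathbb{P}_m(E)\to \mathbb{P}(E)$). For any $p$ whose distance to $\partial E$ exceeds $\epsilon$, the shrunken cell $C(p)$ lies entirely on one side of $\partial E$, so $g_m(p)=\mathbf{1}_E(p)$ exactly; the remaining generators lie in an $\epsilon$-neighbourhood of $\partial E$ whose $\rho$-measure vanishes as $\epsilon\to 0$. Using exchangeability of the generators, the expected discrepancy $\mathbb{E}[g_m(p_1)-\mathbf{1}_E(p_1)]$ is then dominated by these vanishing quantities plus the probability of the ``bad'' events (non-shrunken or unbounded cells), and a second-moment estimate together with Chebyshev's inequality upgrades convergence of the mean to convergence in probability.

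The main obstacle is handling generators whose Voronoi cells are unbounded. Such generators lie on the convex hull of $P$, and although they always exist, their proportion within any bounded region tends to zero under the full-support assumption. For their contribution to $\mathbb{P}_m(E)$ one relies crucially on $K(p,\cdot)\in L^1(\mathbb{R}^n)$, which keeps the kernel mass of any tail region of the cell uniformly controlled and makes the analysis insensitive to the unboundedness.
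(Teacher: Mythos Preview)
Your overall structure coincides with the paper's: both write $\mathbb{P}_m(E)$ as the empirical count $\frac{1}{m}|P\cap E|$ plus a residue carried by cells that straddle $\partial E$, invoke the Portmanteau reduction to bounded $\mathbb{P}$-continuity sets, and control the residue by showing that Voronoi cells shrink. The paper proves a \emph{uniform} shrinking statement (the maximum diameter of any cell meeting $E$ tends to $0$ in probability), which in turn requires a preliminary ``shelling'' lemma confining all such cells to a fixed bounded region and a nontrivial cone-geometry fact borrowed from the literature. Your per-generator first-moment route is in principle a legitimate simplification, since it can replace uniformity by dominated convergence over the law of $p_1$.

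There is, however, a concrete gap in your last step. You propose to pass from $\mathbb{E}[g_m(p_1)-\mathbf 1_E(p_1)]\to 0$ to convergence in probability of $\frac{1}{m}\sum_p(g_m(p)-\mathbf 1_E(p))$ via ``a second-moment estimate together with Chebyshev's inequality''. But the $g_m(p_i)$ depend on the \emph{whole} tessellation and are not independent; by exchangeability the variance of the average contains the term $\frac{m-1}{m}\operatorname{Cov}\bigl(g_m(p_1),g_m(p_2)\bigr)$, which you have not bounded and which does not obviously vanish. The repair is immediate: work instead with the nonnegative average $\frac{1}{m}\sum_p|g_m(p)-\mathbf 1_E(p)|$, whose expectation equals $\mathbb{E}|g_m(p_1)-\mathbf 1_E(p_1)|$ and is dominated by the same bad-event probabilities you list, and apply Markov's inequality---no second moment is needed. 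This is effectively what the paper does, bounding the residue pointwise by $2R$ with $R=\frac{1}{m}|\{p:C(p)\cap\partial E\neq\emptyset\}|$ and then showing $R\to 0$. As a minor aside, your closing paragraph on unbounded cells is a red herring: integrability of $K$ plays no role in the convergence argument beyond making the CVDE well-defined, and the event $\{C(p_1)\text{ unbounded}\}$ is already contained in the event $\{C(p_1)\not\subseteq B(p_1,\epsilon)\}$ whose probability you have argued tends to zero.
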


\begin{proof}
We outline here an idea of the proof and refer to the Appendix for full details. For a measurable set $E$, $\mathbb{P}_m(E)$ is equal to
\begin{equation}
\label{eqproof}
\frac{1}{m}| P \cap E | + residue
\end{equation}
    where the residue is bounded by (twice) the relative number $R$ of generators whose Voronoi cell intersects the boundary $\partial E$ of $E$. The variable $\frac{1}{m}| P \cap E |$ tends to $\mathbb{P}(E)$ in probability by the law of large numbers. 
    
    We then proceed to show that the boundary term $R$ tends to $0$ in probability. To this end, we first prove that the diameters of the Voronoi cells intersecting $E$ tend uniformly to $0$, which in turn requires a preliminary result constraining such cells in a neighbour of $E$ (which is assumed to be bounded). Given that, we conclude that $R$ tends to $\mathbb{P}(\partial E)$ by the law of large numbers. By the Portmanteau Lemma (\cite{van2000asymptotic}), we can assume that $\mathbb{P}(\partial E) = 0$ (and that $E$ is bounded), which concludes the proof.  
\end{proof}

 Note that the above results holds for any (integrable) kernel, thus even for discontinuous ones. The kernel is fixed, and there is no need for an eventual bandwidth (Equation \ref{convolutional}) to vanish asymptotically. This is in contrast with KDE, which requires $h$ to tend to $0$ at an appropriate rate in order to obtain convergence to $\rho$ (\cite{devroye1979l1, jiang2017uniform}). This is because of the local geometric bias inherent to the KDE, as discussed in Section \ref{intro}. In order to obtain convergence, such bias has to be amended with a vanishing bandwidth that annihilates the local geometry of the kernel. 
 
We remark that the assumption on the support of $\rho$ in Theorem \ref{thmbody} is satisfied in the presence of noise, which is realistic in practical scenarios. Assuming that data exhibit, say, Gaussian noise, the actual underlying distribution is of full support even when the ideal one is concentrated on a submanifold of $\mathbb{R}^n$.  

\subsection{Bandwidth Asymptotics}\label{banddiscussion}
Consider a kernel in the form of Equation \ref{convolutional}. The asymptotics with respect to $h$ (with fixed set of generators $P$) can be easily deduced:

\begin{prop}
For a continuous $K: \ \mathbb{R}^n \rightarrow \mathbb{R}_{\geq 0}$, the following hold: 

     (i) As $h$ tends to $0$, $f$ converges in distribution to the empirical measure $\frac{1}{|P|}\sum_{p \in P}\delta_p$, where $delta_p$ denotes the Dirac's delta centered in $p$ i.e., the probability measure concentrated in the singleton $\{ p\}$.
    
      (ii) Consider the restriction of the kernel to a bounded region $A$ (i.e., its product with $\chi_A$). As $h$ tends to $+ \infty$, $f$ converges in distribution to the VDE $\widetilde{f}$. 
\end{prop}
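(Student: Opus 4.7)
The plan is to exploit the decomposition
\[
f \;=\; \frac{1}{|P|} \sum_{p \in P} \mu_{p,h},
\qquad
\mu_{p,h}(x) \;=\; \frac{K((p-x)/h)\,\chi_{C(p)}(x)}{\vol_p(C(p))},
\]
so that each $\mu_{p,h}$ is a probability density supported on $C(p)$ and $f$ is their uniform average. Weak convergence of $f$ therefore reduces to weak convergence of each $\mu_{p,h}$, and it suffices to analyze a single Voronoi cell.

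For part (i), fix $p$ and a bounded continuous test function $\phi$, then change variables $y = (x-p)/h$, sending $C(p)$ to $A_h := (C(p)-p)/h$. Since $p$ lies in the open interior of its own Voronoi cell, the sets $A_h$ exhaust $\mathbb{R}^n$ as $h \to 0$. Writing
\[
\int \phi\, d\mu_{p,h} \;=\; \frac{\int_{A_h} \phi(p+hy)\,K(-y)\,dy}{\int_{A_h} K(-y)\,dy},
\]
dominated convergence (with integrable majorant $\|\phi\|_\infty\,|K(-y)|$) drives the numerator to $\phi(p)\,\|K\|_{L^1}$ and the denominator to $\|K\|_{L^1}$. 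Hence $\int \phi\, d\mu_{p,h} \to \phi(p)$, i.e.\ $\mu_{p,h}$ converges weakly to $\delta_p$; averaging over $p$ yields the empirical measure.

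For part (ii), the kernel becomes $K((p-\cdot)/h)\,\chi_A$, so the effective integration domain $C(p) \cap A$ is uniformly bounded. Continuity of $K$ at the origin combined with boundedness of $A$ implies the uniform limit $K((p-x)/h) \to K(0)$ on $x \in A$ as $h \to \infty$. Consequently $\vol_p(C(p)\cap A) \to K(0)\,\mathrm{Leb}(C(p)\cap A)$, and pointwise on $A$
\[
\mu_{p,h}(x) \;\longrightarrow\; \frac{\chi_A(x)}{\mathrm{Leb}(C(p)\cap A)},
\]
which is the restriction of the VDE $\widetilde{f}$ to $C(p)$ (after the $1/|P|$ average). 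Since $\mu_{p,h}$ is uniformly bounded on $A$ for large $h$ and vanishes outside, dominated convergence gives $\int \phi\, f\, dx \to \int \phi\, \widetilde{f}\, dx$ for bounded continuous $\phi$.

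The main subtlety lies in part (i): the expanding domain $A_h$ forbids a naive pointwise limit of the integrand, but dominated convergence still applies once one recognizes that $\chi_{A_h}(y) \to 1$ pointwise, which rests on the intrinsic geometric fact that each generator sits in the open interior of its own cell. Part (ii) is comparatively elementary, amounting to uniform convergence on the bounded region $A$.
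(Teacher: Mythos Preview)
Your proof is correct and follows essentially the same strategy as the paper: an approximate-identity argument for (i) and pointwise convergence of densities for (ii). Your treatment is in fact more careful---you correctly identify the limiting constant in (i) as $\|K\|_{L^1}$ (the paper writes $K(0)$, which appears to be a slip), and you spell out the change of variables and dominated-convergence step explicitly, whereas the paper simply appeals to ``the general theory of approximators of unity'' and, for (ii), to Scheff\'e's Lemma.
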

\begin{proof}
For the first statement, note that $\frac{1}{h^n}K(\frac{x}{h})$ tends to $K(0) \delta_0$ in distribution by the general theory of approximators of unity. Since $\lim_{h \to 0} \vol_p(C(x)) = K(0)$ as well for every $p$, the claim follows from the definition of the CVDE (Equation \ref{defcvde}). As for the second part, observe that $K(x,p)$ tends to $K(0)$ by continuity of $K$ and thus $f(x)$ tends to $\widetilde{f}(x)$ for almost every $x$. To conclude, pointwise convergence of PDFs implies convergence in distribution (Scheffé's Lemma). 
\end{proof}

The asymptotics for small bandwidth are the same as for the KDE. For bandwidth tending to infinity, however, the KDE tends to the uniform distribution over $A$, while the CVDE still gives reasonable estimates in the form of its non-compactified version.

\section{RELATED WORK}\label{relwork}

\textbf{Non-parametric Density Estimation}. The first traces of systematic density estimation date back to the introduction of histograms (\cite{pearsonhistograms}). Those have been subsequently considered with a variety of cell geometries such as rectangles, triangles (\cite{scottrecrangles}) and hexagons (\cite{hexagons}). The choice of geometry constitutes the main source of bias for the histogram-based density estimator.

Arguably, the most popular density estimator is the KDE, first discussed by \cite{rosenblattkde} and \cite{parzenkde}. Numerous extensions have followed, for example, to the multivariate case (\cite{izemanmulti, silvermanmulti}), bandwidth selection methods (\cite{band1, band2}) and algorithms for adaptive bandwidths (\cite{wang2007bandwidth, van2017variable}). The latter aim to partially amend for the local geometric bias of the KDE, which is in line with the present work. However, adapting the bandwidth alone provides a partial solution since it enables different scales of the same local geometry. Among applications, the KDE has been deployed to estimate traffic incidents (\cite{kdetraffic}), archeological data (\cite{kdearcheo}) and wind speed (\cite{kdewind}) to name a few.

\textbf{VDE and its Applications}. The VDE has been originally introduced by \cite{ord} under the name 'ideal estimator' because of its local geometric adaptivity. Subsequent works have discussed regularisation (\cite{voronoiresample}) and lower-dimensional aspects  (\cite{voronoiplanar}). The VDE has seen a applications to a variety of real-world densities such as neurons in the brain (\cite{voronoineuronal}), photons (\cite{voronoiphotons}) and stars in a galaxy (\cite{voronoiastronomy}). Although promising, the VDE has been previously limited to low-dimensional problems. 



\textbf{Theoretical Convergence}. Convergence of the VDE has been previously considered in the literature, usually in the language of Poisson point processes. For uniform underlying distribution, pointwise convergence of the averaged estimated density (i.e., unbiasedness: $\lim_{m \to \infty} \mathbb{E}_{P \sim \rho^m}[\widetilde{f}(x)] = \rho(x)$ for almost all $x$) has been proven by  \cite{stationarypoisson}. For non-uniform distributions, the same convergence has been shown by \cite{voronoiresample} with strong continuity assumptions on the density, which allows a reduction to the uniform case. Our theoretical result is based on a different, non-averaged notion of convergence and holds for the more general CVDE with no continuity assumptions.

\section{EXPERIMENTS}


\begin{figure}[t]
    \setlength{\tabcolsep}{-5pt}
    \renewcommand{\arraystretch}{0}
    \centering
    \begin{tabular}{p{0.2\linewidth}cc}
        & \textbf{$n=2$} & \textbf{$n=10$} \\
        {\footnotesize Original} & \includegraphics[align=c, width=.35\linewidth]{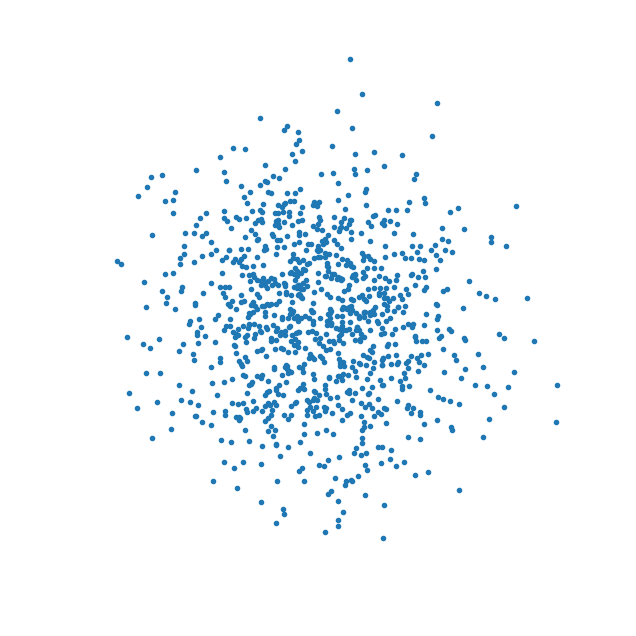} & \includegraphics[align=c, width=.35\linewidth]{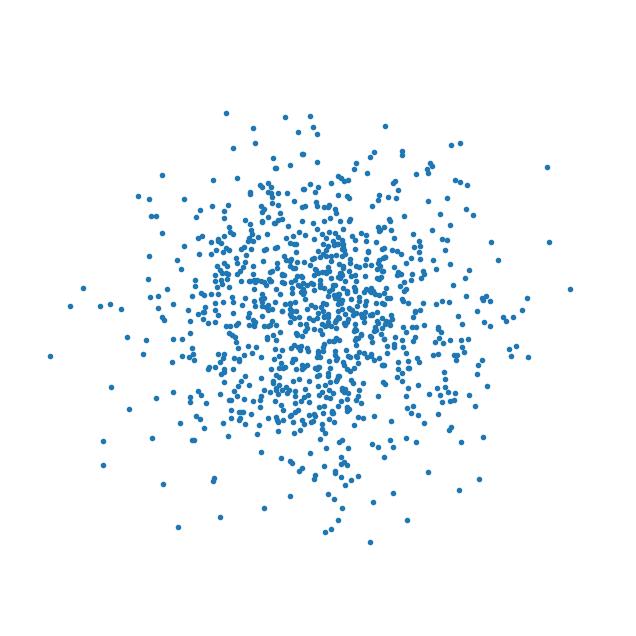}\\[0pt]
        
        {\footnotesize VDE} & \includegraphics[align=c, width=.35\linewidth]{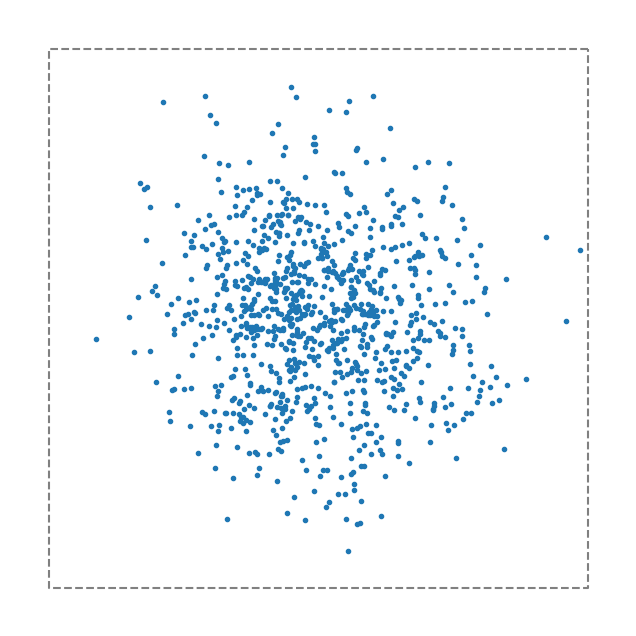} & \includegraphics[align=c, width=.35\linewidth]{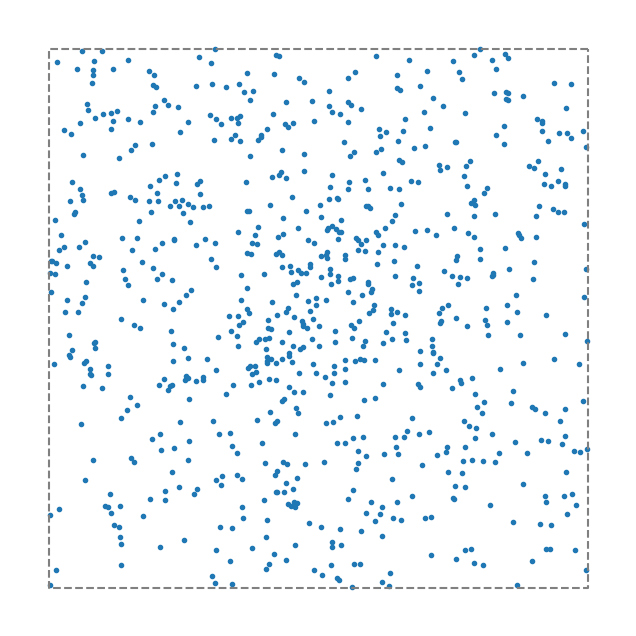}\\[0pt]

        {\footnotesize CVDE} & \includegraphics[align=c, width=.35\linewidth]{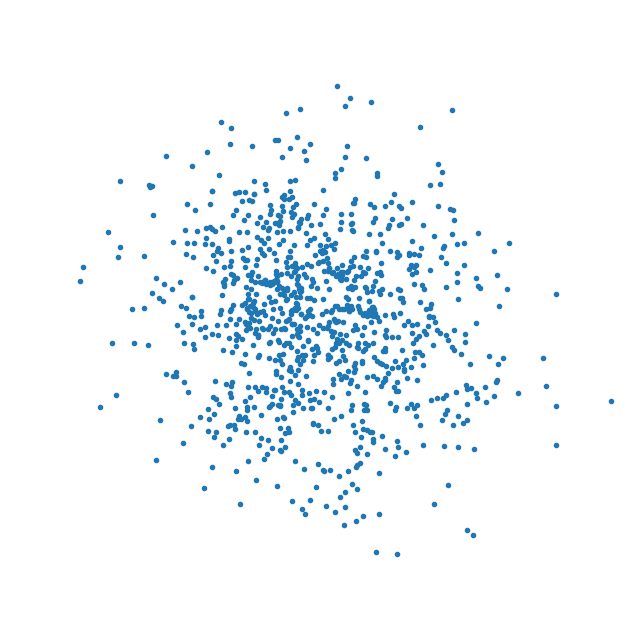} & \includegraphics[align=c, width=.35\linewidth]{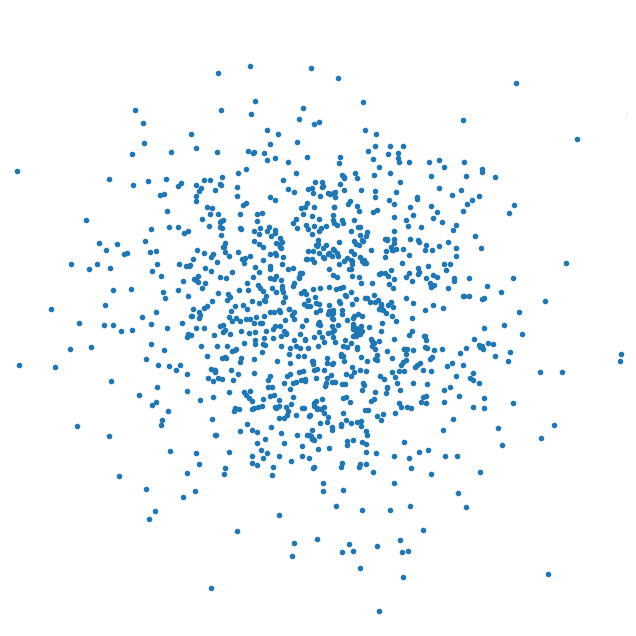}
    \end{tabular}
    \caption{Visual comparison between samples from the CVDE and the VDE estimating an $n$-dimensional Gaussian for $n=2,10$. In the $10$-dimensional case, points are projected onto a plane. In high dimensions, the VDE appears as biased towards a uniform distribution. This is because of abundance of unbounded cells, over which the estimated density is constant. }
    \label{fig:vde-vs-cvde}
\end{figure}

\begin{figure}[t]
    \centering
    \begin{subfigure}[b]{.55\linewidth}
        \centering
        \includegraphics[width=\linewidth]{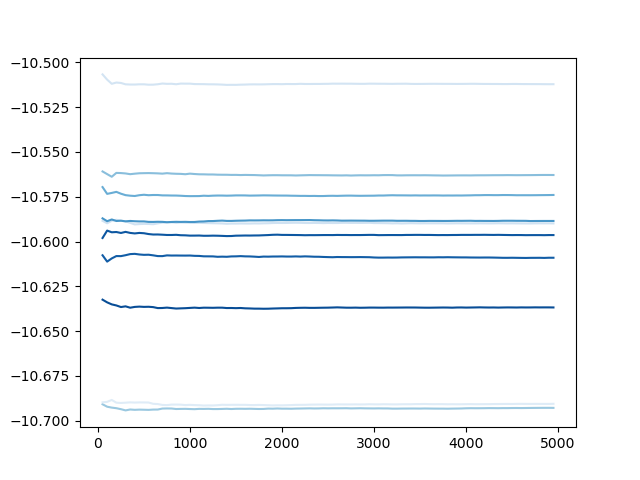}
        \subcaption*{$10$-dimensional Gaussian}
    \end{subfigure} 
      \begin{subfigure}[b]{.55\linewidth}
        \centering
        \includegraphics[width=\linewidth]{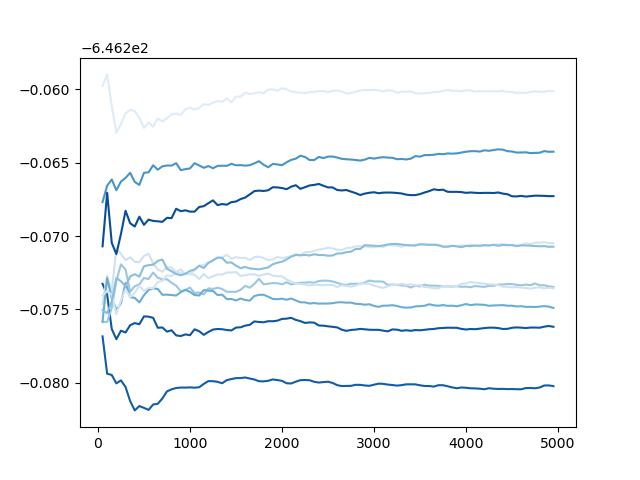}
        \subcaption*{MNIST}
    \end{subfigure}
    \begin{picture}(0,0)
    \put(-85,132){{\tiny number of versors}}
      \put(-135,160){\rotatebox{90}{\tiny avg. log-likelihood}}
        \put(-85,16){{\tiny number of versors}}
      \put(-135,44){\rotatebox{90}{\tiny avg. log-likelihood}}
    \end{picture}
    \caption{Stabilisation of the Monte Carlo spherical integral. The plots display the average log-likelihood of the estimated density on the training set as the number of sampled versors increases. For each of the 2 datasets, 10 experimental runs are shown. }
    \label{fig:si_convergence}
\end{figure}

\subsection{Dataset Description}\label{datadescription}
In our experiments, we evaluate the CVDE on datasets of different nature: simple \emph{synthetic distributions} of Gaussian type, \emph{image data} in pixel-space, and \emph{sound data} in a frequency space. The datasets we deploy are the following:

\noindent
{\bf  Gaussians and Gaussian Mixtures:} for synthetic experiments we generate two types of datasets, each containing $1000$ training and $1000$ test points. The first one consists of samples from an $n$-dimensional standard Gaussian distribution. The second one is sampled from a Gaussian mixture density $\rho = \frac{1}{2}(\rho_1 + \rho_2)$. Here, $\rho_1, \rho_2 $ are Gaussian distributions with means $\mu_1 = (-0.5, 0 , \cdots, 0)$, $\mu_2 = (0.5, 0 , \cdots, 0) $ and standard deviations $\sigma_1 = 0.1$, $\sigma_2 = 100$ respectively. 
    
 \noindent
{\bf  MNIST} (\cite{deng2012mnist}): the dataset  consists of $28 \times 28$ grayscale images of handwritten digits which are normalised in order to lie in $[0,1]^{28 \times 28}$. For each experimental run, we sample half of the $60000$ training datapoints in order to evaluate the variance of the estimation. The test set size is $10000$.

 \noindent
{\bf  Anuran Calls} (\cite{Dua:2019}): the datasets consists of $7195$ calls from $10$ species of frogs which are represented by $21$ normalised mel-frequency cepstral coefficients in $[0,1]^{21}$. We retain $10 \%$ of data for testing and again sample half of the training data at each experimental run. 

\begin{figure*}[t!]
    \centering
    \begin{subfigure}[b]{.3\linewidth}
        \centering
        \includegraphics[width=\linewidth]{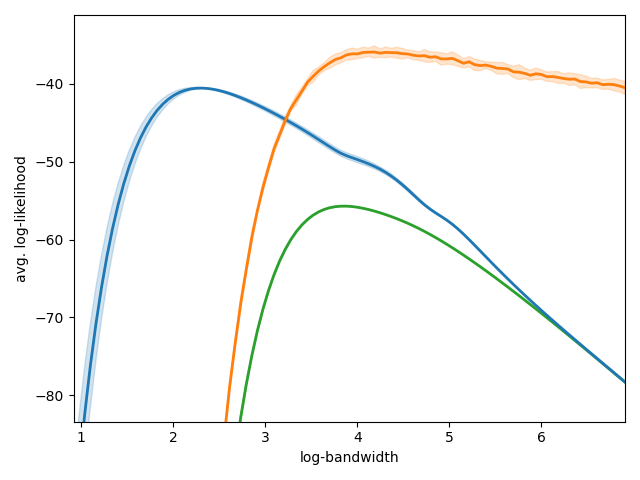}
        \subcaption*{Gaussian Mixture}
    \end{subfigure}
    \begin{subfigure}[b]{.3\linewidth}
        \centering

 \includegraphics[width=\linewidth]{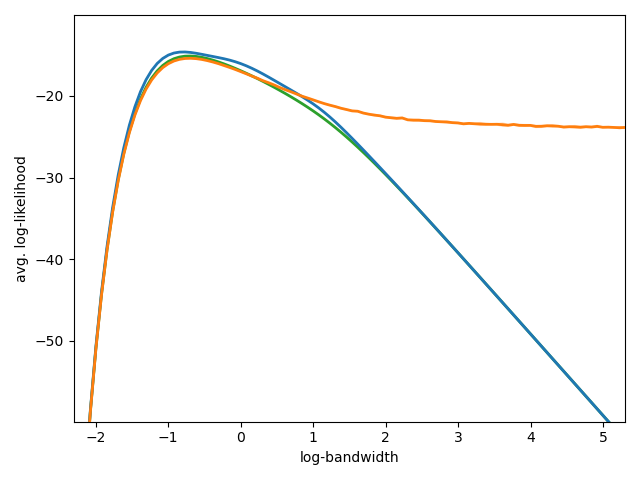}
        \subcaption*{MNIST}
    \end{subfigure}
    \begin{subfigure}[b]{.3\linewidth}
        \centering
        \includegraphics[width=\linewidth]{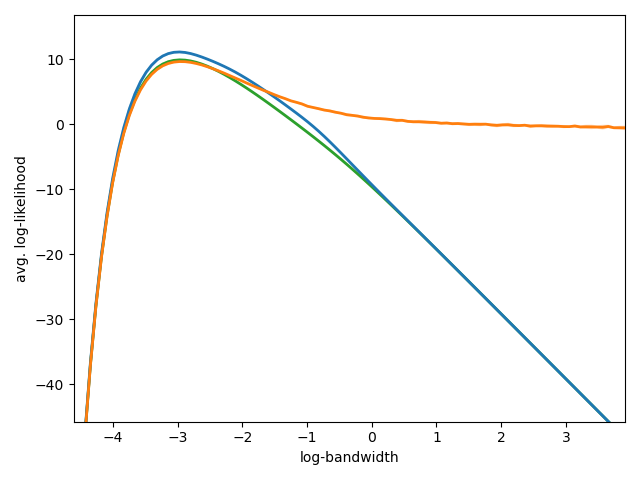}
        \subcaption*{Anuran Calls}
    \end{subfigure}
    \begin{subfigure}[b]{.05\linewidth}
        \centering
        \includegraphics[width=\linewidth]{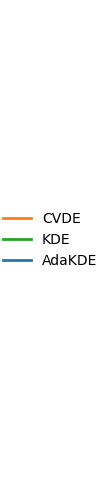}
        \subcaption*{}
    \end{subfigure}

    \caption{Empirical comparisons between the \colorbox{YellowOrange!40}{CVDE}, the \colorbox{Green!30}{KDE} and the KDE with adaptive bandwidth (\colorbox{RoyalBlue!30}{AdaKDE}). The plots display the average log-likelihood over the test set as the bandwidth varies. The shadowed region represents standard deviation (with respect to sampling of the dataset) on $5$ experimental runs. }
    \label{fig:vskde}
\end{figure*}

\subsection{Comparison with VDE} 

In this section, we evaluate empirically the necessity of compactification for high-dimensional data. To this end, we visually compare samples from the CVDE (with Gaussian kernel) and from the the VDE. The VDE is implemented with a bounding hypercube $A = [-\frac{7}{2}, \frac{7}{2}]^n$ as described in Section \ref{method}.

 We consider the Gaussian dataset in $n=2$ and $n=10$ dimensions. For both the estimators, $1000$ points are sampled via hit-and-run (with trajectories of length $I=1000$) from the estimated density. The bandwidth for the CVDE is chosen following Scott's rule (\cite{scottdensity}) and amounts to $h=0.33$ in two dimensions and to $0.66$ in ten dimensions.
 
 The results are presented in Figure \ref{fig:vde-vs-cvde}. In two dimensions, both the estimators produce samples that are visually close to the ground-truth distribution. However, in ten dimensions the sampling quality of VDE drastically decreases, while the CVDE still produces a satisfactory result. In the provided examples, more than $85\%$ of points sampled from the VDE belong to the Voronoi cells intersecting the boundary of $A$. Since the VDE is uniform within each cell, the estimation and the consequent sampling is biased by the choice of the bounding region $A$, especially in high dimensions.

\subsection{Convergence of the Spherical Integral}

We now empirically estimate the amount of Monte Carlo samples required for spherical integration (Equation \ref{sphericalequation}). To this end, we visualize how the approximation for the volumes in the CVDE (with Gaussian kernel) changes as the number $|\Sigma|$ of versors increases.  We consider two datasets: the 10-dimensional Gaussian one and MNIST. Each plot in Figure \ref{fig:si_convergence} displays 10 curves, each corresponding to one experimental run. What is shown is the average log-likelihood of the estimated density on the training set, which correponds up to an additive constant to the average negative logarithmic volume $- \frac{1}{|P|}\sum_{p \in |P|} \log \vol_p(C(p))$ of the Voronoi cells. The bandwidth is again chosen according to Scott's rule for the Gaussian dataset while it is set to $1$ for MNIST. Evidently, all the curves are stable at $|\Sigma|=5000$ sampled versors, which we fix as a parameter in later experiments.





\subsection{Comparison with KDE}\label{expkde}  
We now compare the CVDE with the KDE (both with Gaussian kernel) on the synthetic and real-world data described in Section \ref{datadescription}. However, the distribution of high-dimensional real-world data is too sparse in the original ambient space to allow for a meaningful comparison. We consequently pre-process the MNIST and the Anuran Calls datasets via Principal Component Analysis (PCA) and orthogonally project them to the $10$-dimensional subspace with largest variance. We set the dimension of the synthetic Gaussian mixture to $10$ as well. 

We compare the CVDE with the standard KDE as well as the KDE with local, adaptive bandwidths (AdaKDE) described in \cite{wang2007bandwidth}. In the AdaKDE the bandwidth $h_p$ depends on $p \in P$ and is smaller when data is denser around $p$. Specifically, denote by $\hat{f}(p)$ the standard KDE estimate with a global bandwidth $h$. Then $h_p = h \lambda_p$ where $\lambda_p =   (g / \hat{f}(p))^{\frac{1}{2}}$ and $g = \prod_{q \in P}\hat{f}(q)^{\frac{1}{|P|}}$. 

We score the estimators via the average log-likelihood on a test set i.e., $P_{\textnormal{test}}$ i.e., $\frac{1}{|P_{\textnormal{test}}|} \sum_{p \in P_{\textnormal{text}}} \log f(p)$. Such score measures the adherence of the estimated density to the ground-truth one and penalizes overfitting thanks to the deployment of the test set. 

The results are displayed in Figure \ref{fig:vskde} with the bandwidth varying for all the estimators on a logarithmic scale. For AdaKDE we vary the global bandwidth for $\hat{f}$. Sampling of training and test data is repeated for $5$ experimental runs, from which mean and standard deviation of the score are displayed. 


As can be seen, on the synthetic dataset (Gaussian Mixture) the CVDE outperforms the baselines at the respective best bandwidth. This shows that the local geometric adaptivity of the CVDE leads to density estimates that are closer to the ground-truth distribution. While AdaKDE outperforms the KDE due to its adaptivity, it still suffers from bias due to the Gaussian kernel (albeit with a local bandwidth) as mentioned in Section \ref{relwork}. On the real-world datasets (MNIST and Anuran Calls) all the considered estimators exhibit a comparable best performance. We hypothesize that the reason behind this is that on small bandwidths the geometry of the kernel outweighs the adaptivity of the estimators. However, the CVDE outperforms the baselines on larger bandwidths. This is consistent with the discussion in Section \ref{banddiscussion}: the CVDE has better asymptotics than the KDE since it tends to the VDE while the KDE degenerates to a uniform estimate.

\section{CONCLUSIONS AND FUTURE WORK}
In this work, we defined an extension of the Voronoi Density Estimator suitable for high-dimensional data, providing efficient methods for approximate computation and sampling. Additionally, we proved convergence to the underlying data density. 

A promising line of future research lies in exploring both theory and applications of the VDE and CVDE to metric spaces beyond the Euclidean one, in particular higher-dimensional Riemannian manifolds. Spheres, for example, naturally appear in the context of normalised data, while complex projective spaces of arbitrary dimension arise as Kendall shape spaces on the plane (\cite{directional}). 

\section{ACKNOWLEDGEMENTS}
This work was supported by the Swedish Research Council, the Knut and Alice
Wallenberg Foundation and the European Research Council (ERC-BIRD-884807).

\newpage


\printbibliography

\appendix

\newpage
\onecolumn

\newpage

\section*{APPENDIX}
We provide here a proof of our main theoretical result with full details. 

\setcounter{section}{4}
\begin{thm}\label{convergence}
Suppose that $\rho$ has support in the whole $\mathbb{R}^n$. For any $K \in L^1(\mathbb{R}^n \times \mathbb{R}^n)$ the sequence of random probability measures $\mathbb{P}_m = f \textnormal{d}x$ defined by the CVDE with $m$ generators converges to $\mathbb{P}$ in distribution w.r.t. $x$ and in probability w.r.t. $P$.  Namely, for any measurable set $E \subseteq \mathbb{R}^n$ the sequence $\mathbb{P}_m(E)$ of random variables over $P$ sampled from $\rho$ converges in probability to the constant $\mathbb{P}(E)$.
\end{thm}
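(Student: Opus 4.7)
The strategy follows the outline given in the body: decompose $\mathbb{P}_m(E)$ into a main term handled by the strong law of large numbers and a boundary residue that must be shown to vanish. First, by the Portmanteau lemma applied to the random measures $\mathbb{P}_m$ and the deterministic $\mathbb{P}$, it suffices to prove $\mathbb{P}_m(E) \to \mathbb{P}(E)$ in probability for every bounded measurable set $E$ satisfying $\mathbb{P}(\partial E) = 0$; from here on we fix such an $E$.

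Using the definition of the CVDE, every generator $p$ with $C(p) \subseteq E$ contributes exactly $1/m$ to $\mathbb{P}_m(E)$, every generator with $C(p) \cap E = \emptyset$ contributes $0$, and the remaining generators (whose cells straddle $\partial E$) contribute values in $[0, 1/m]$. Comparing the counts $|P \cap E|$ and $|\{p : C(p) \subseteq E\}|$ reveals a discrepancy consisting only of generators in $E$ whose cell meets $\partial E$. Consequently
\begin{equation}
\mathbb{P}_m(E) \;=\; \tfrac{1}{m}|P \cap E| + R_m, \qquad |R_m| \;\leq\; \tfrac{2}{m}\bigl|\{p \in P : C(p) \cap \partial E \neq \emptyset\}\bigr|.
\end{equation}
The strong law of large numbers applied to the i.i.d.\ draws from $\rho$ immediately gives $\tfrac{1}{m}|P \cap E| \to \mathbb{P}(E)$ in probability.

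It remains to show $R_m \to 0$ in probability, for which I would establish the following geometric lemma: for any $\varepsilon > 0$, there exists $\delta > 0$ and a random $m_0$, finite almost surely, such that for all $m \geq m_0$ every Voronoi cell $C(p)$ with $C(p) \cap \partial E \neq \emptyset$ is bounded, has diameter at most $\varepsilon$, and is contained in the $\varepsilon$-neighborhood of $\partial E$. The proof uses the full support of $\rho$ in an essential way: covering a fixed bounded neighborhood of $\partial E$ by finitely many open balls of radius $\delta$, each of which carries strictly positive $\rho$-mass, the law of large numbers guarantees that every ball contains a generator with probability tending to one. Such density of generators bounds cell diameters in the neighborhood; since $C(p) \cap \partial E \neq \emptyset$ forces $p$ to lie near $\partial E$, a chaining argument along straight segments inside $C(p)$ (via convexity of the cell) then confines the entire cell to a tubular neighborhood of $\partial E$. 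Granted the lemma, every generator counted by $R_m$ lies in $B_\varepsilon(\partial E)$, so a final application of the law of large numbers yields $|R_m| \leq 2\,\mathbb{P}(B_\varepsilon(\partial E)) + o_{\mathbb{P}}(1)$. Letting $\varepsilon \to 0$ and invoking outer regularity together with $\mathbb{P}(\partial E) = 0$ concludes the argument.

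The principal obstacle is the geometric lemma, and specifically ruling out the possibility that a cell touching $\partial E$ extends unboundedly or far away from $\partial E$: such cells do arise in general, as those at the boundary of the convex hull of $P$ are unbounded. The full-support hypothesis on $\rho$ is the decisive ingredient preventing the coverage gaps that would otherwise allow cells to escape outward, and making the chaining argument airtight without any regularity assumption on $\rho$ or $K$ is the most delicate step.
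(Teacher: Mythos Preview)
Your outline coincides with the paper's proof: the Portmanteau reduction to bounded continuity sets, the decomposition $\mathbb{P}_m(E)=\tfrac{1}{m}|P\cap E|+R_m$ with $|R_m|$ controlled by twice the fraction of generators whose cell meets $\partial E$, the law of large numbers for the main term, and the conclusion via $R_m\leq \tfrac{2}{m}|P\cap \partial E_\varepsilon|\to 2\,\mathbb{P}(\partial E_\varepsilon)\to 0$ once the boundary cells are trapped in an $\varepsilon$-neighborhood, are all exactly as in the Appendix.

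The only substantive difference is how the ``principal obstacle'' you flag---confining and shrinking the cells that touch $\partial E$---is handled. The paper does not use a ball covering plus chaining; it invokes a Euclidean cone lemma (Lemma~5.3 of \cite{vorvolumes}) twice. First (Proposition~\ref{shelling}) it shows that once $P$ hits each of finitely many conical annuli of fixed aperture around $E$, every cell meeting $E$ is contained in a fixed bounded $B$; second (Proposition~\ref{diameters}) it bounds $\operatorname{diam}C(p)$ by a constant times the maximum over cones of the nearest-neighbor distance within that cone, and shows this tends to $0$ in probability via a union bound over generators and cones. Your $\delta$-ball covering idea can be pushed through (pick a point $w$ on the segment $[p,z]$ at a prescribed distance from $\partial E$ and use a nearby generator to beat $p$), but the cone formulation is what makes the two steps clean and is the missing ingredient in your sketch.
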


We shall first build up some machinery necessary for the proof. First of all, the following fact on higher-dimensional Euclidean geometry will come in hand. 

\begin{prop}\label{euclideanlemma}
\textnormal{(\cite{vorvolumes}, Lemma 5.3)} Let $x \in \mathbb{R}^n$, $\delta > 0$. There exist constants $1 < c_1 < c_2 - 1 < 31 $ such that for any open cone $K \subseteq \mathbb{R}^n$ centered at $x$ of solid angle $\frac{\pi}{12}$ and any $p, q, z \in K$, if 
$$ d(x,p)< \delta, \ c_1 \delta \leq d(x,q)  < c_2 \delta, \ d(x,z)\geq 32 \delta $$
then $d(z,q) < d(z,p)$.
\end{prop}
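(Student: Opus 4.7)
The plan is to reduce the statement to a one-dimensional algebraic inequality by recentering at $x$ and exploiting the alignment forced by the cone. Translate so $x$ is the origin and set $u = p-x$, $v = q-x$, $w = z-x$ with norms $r_u = |u|<\delta$, $r_v=|v|\in[c_1\delta,c_2\delta)$, and $r_w=|w|\ge 32\delta$. Since all three vectors lie in a common cone of opening $\pi/12$ about a single axis, every pair forms an angle at most $\pi/6$; in particular $\cos\angle(w,v)\ge \cos(\pi/6) = \sqrt{3}/2$, while trivially $\cos\angle(w,u)\le 1$.

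Expanding squared distances gives
\[
d(z,p)^2 - d(z,q)^2 \;=\; (r_u^2 - r_v^2) + 2\langle w,\,v-u\rangle \;=\; 2r_w\bigl(r_v\cos\angle(w,v) - r_u\cos\angle(w,u)\bigr) - (r_v^2 - r_u^2).
\]
So it is enough to prove
\[
2r_w\bigl(r_v\cos\angle(w,v) - r_u\cos\angle(w,u)\bigr) \;>\; r_v^2 - r_u^2.
\]
Using the two cosine bounds above, together with $r_w \ge 32\delta$, $r_v\ge c_1\delta$, and $r_u < \delta$, the left-hand side is at least $2\cdot 32\delta\bigl(c_1\delta\cdot \tfrac{\sqrt 3}{2} - \delta\bigr) = 64\delta^2\bigl(\tfrac{\sqrt 3}{2}c_1 - 1\bigr)$. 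The right-hand side is bounded above by $r_v^2 < c_2^2\delta^2$ since $r_u^2\ge 0$.

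It therefore suffices to choose $c_1,c_2$ satisfying $1 < c_1 < c_2 - 1 < 31$ together with
\[
64\Bigl(\tfrac{\sqrt 3}{2}c_1 - 1\Bigr) \;\ge\; c_2^{\,2}.
\]
This region is nonempty and comfortable: for instance $c_1 = 22$, $c_2 = 28$ yields left side $\approx 1155$ and right side $784$, while $22 < 27 < 31$ verifies the admissibility band. The only real tension in the argument is the trade-off between making $c_1$ large enough that $q$ is forced meaningfully far from $p$ along the cone axis, and keeping $c_2$ small enough that $z$ still dwarfs $q$ in distance from $x$; the bound $64\bigl(\tfrac{\sqrt 3}{2}c_1 - 1\bigr) \ge c_2^2$ captures this quantitatively, and the fact that $32$ in the hypothesis on $r_w$ is sufficiently large compared with the width imposed by $c_2<32$ is precisely what makes the inequality go through. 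Everything else is elementary, and the cone's narrowness (through $\sqrt 3/2$) is used only once, in lower-bounding $\cos\angle(w,v)$.
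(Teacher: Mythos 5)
The paper does not prove this proposition at all --- it is imported verbatim from the cited reference (\cite{vorvolumes}, Lemma 5.3) --- so there is no in-paper argument to compare against; your proof stands as a self-contained replacement, and it is correct. The algebra checks out: $d(z,p)^2-d(z,q)^2=2\langle w,v-u\rangle-(r_v^2-r_u^2)$, the cone hypothesis gives $\cos\angle(w,v)\ge\tfrac{\sqrt3}{2}$, and with $c_1=22$, $c_2=28$ one indeed gets $64\bigl(\tfrac{\sqrt3}{2}c_1-1\bigr)\approx 1155>784=c_2^2$, with strictness supplied by $r_u<\delta$ and $r_v<c_2\delta$. Two small points worth making explicit. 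First, your substitution of $r_w\ge 32\delta$ into the lower bound is only monotone because the factor $r_v\tfrac{\sqrt3}{2}-r_u\ge\delta\bigl(\tfrac{\sqrt3}{2}c_1-1\bigr)$ is positive, which requires $c_1>2/\sqrt3$; your final choice satisfies this, but the positivity should be stated before the substitution is made. Second, you silently interpret ``cone of solid angle $\tfrac{\pi}{12}$'' as a circular cone of half-aperture $\tfrac{\pi}{12}$, so that any two of $u,v,w$ subtend an angle at most $\tfrac{\pi}{6}$; this is the standard reading and the one consistent with the paper's later use of finitely many such cones covering $\mathbb{R}^n$ with dimension-independent constants, but since the phrase is ambiguous (a literal solid-angle measure would make the aperture dimension-dependent), the interpretation deserves one sentence. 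With those two remarks added, the argument is complete, elementary, and has the virtue over the mere citation of exhibiting admissible constants explicitly and showing exactly where the constraints $1<c_1<c_2-1<31$ and the threshold $32\delta$ interact through the single inequality $64\bigl(\tfrac{\sqrt3}{2}c_1-1\bigr)\ge c_2^2$.
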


\begin{figure}[h!]
\centering
\includegraphics[width=.4\linewidth]{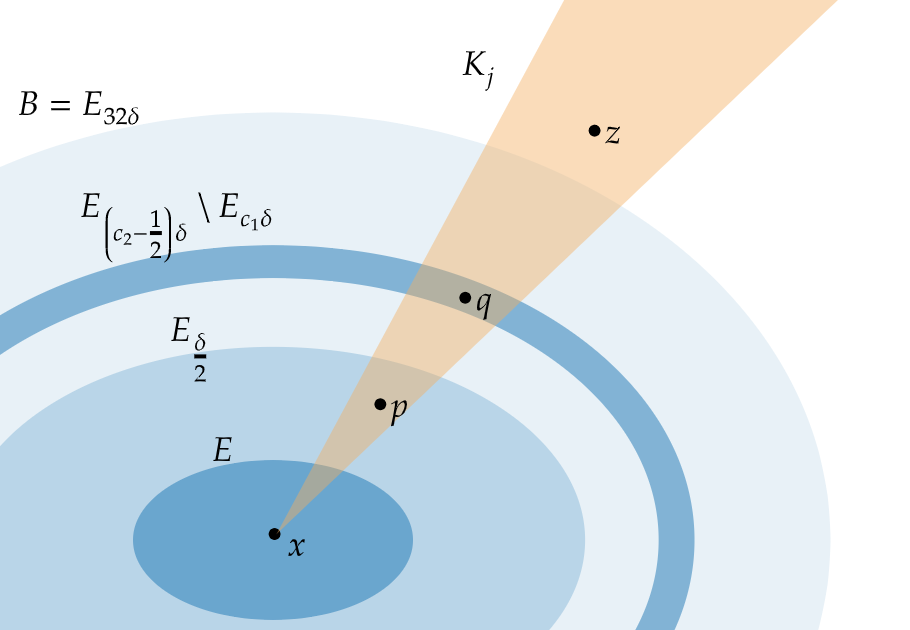}
\caption{Graphical depiction of sets and points appearing in the proof of Proposition \ref{shelling}.}
\end{figure}

We can now deduce the following.

\begin{prop}\label{shelling}
Let $\emptyset \not = E \subseteq \mathbb{R}^n$ be a bounded measurable set. There exists a bounded measurable set $B \supseteq E$ such that as $m = |P|$ tends to $\infty$, the probability with respect to $P \sim \rho^m$ that every Voronoi cell intersecting $E$ is contained in $B$ tends to $1$. 
\end{prop}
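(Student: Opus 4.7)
The plan is to pick a small scale $\delta > 0$ and take $B$ to be the bounded $33\delta$-thickening $\{z : d(z,E) \leq 33\delta\}$ of $E$. The strategy is a finite covering argument invoking Proposition \ref{euclideanlemma} to prevent any cell touching $E$ from extending beyond distance $32\delta$ from its generator.

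First, cover $\overline{E}$ by finitely many small balls $A_i = B(x_i, \epsilon)$, $i = 1, \ldots, N_1$, with $x_i \in \overline{E}$ and $\epsilon \ll \delta$. Fix also a finite cover of $\mathbb{S}^{n-1}$ by spherical caps, defining cones $K_j \subseteq \mathbb{R}^n$ of solid angle $\pi/12$ with central directions $\sigma_j$, $j = 1, \ldots, N_2$. For each pair $(i, j)$ single out a deterministic ``safe'' annular cone wedge $R_{i,j}$ anchored at $x_i$ in direction $\sigma_j$, slightly shrunken in both radius and cone angle so that for every $p \in A_i$, $R_{i,j} \subseteq (p + K_j) \cap \{y : c_1\delta \leq \|y - p\| < c_2\delta\}$. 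Each $A_i$ and each $R_{i,j}$ has strictly positive Lebesgue measure and hence strictly positive $\mathbb{P}$-mass by the full-support assumption on $\rho$.

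By a union bound over the finitely many sets, the event $G_m$ that every $A_i$ and every $R_{i,j}$ contains at least one of the $m$ generators has $\mathbb{P}(G_m) \to 1$ as $m \to \infty$. On $G_m$, fix any $p \in P$ with $C(p) \cap E \neq \emptyset$, pick $y \in C(p) \cap E$, and pick the cover ball $A_i \ni y$. A generator $q_0 \in A_i$ forces $d(y, p) \leq d(y, q_0) < 2\epsilon$, hence $d(x_i, p) < 3\epsilon$. If some $z \in C(p)$ satisfied $\|z - p\| \geq 32\delta$, then $(z - p)/\|z - p\|$ would fall in some cone $K_j$; the generator $q \in R_{i,j}$ guaranteed by $G_m$ then sits in $p + K_j$ at distance in $[c_1\delta, c_2\delta)$ from $p$, so Proposition \ref{euclideanlemma} applied with apex $p$ yields $d(z, q) < d(z, p)$, contradicting $z \in C(p)$. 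Thus $C(p) \subseteq B(p, 32\delta) \subseteq B$.

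The principal subtlety is the scale-tuning that makes each deterministic wedge $R_{i,j}$, anchored at the covering point $x_i$, uniformly fit inside the required wedge viewed from every actual generator $p \in A_i$: the apex is allowed to drift by up to $\epsilon$ and the cone direction by a correspondingly small angle, so $R_{i,j}$ must be strictly interior to the ideal wedge in both the radial and angular variables. This is easy to arrange for $\epsilon$ small enough compared to $\delta$ but requires careful geometric bookkeeping. Once that is in place, the probabilistic step reduces to a routine union bound over a finite collection of deterministic events of positive $\mathbb{P}$-mass.
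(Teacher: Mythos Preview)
Your argument is correct and reaches the same conclusion, but the route differs from the paper's in an instructive way. The paper uses a \emph{single} apex: it fixes one point $x \in E$, sets $\delta = 2\,\diam E$, and observes that once $P$ meets $E$, any generator $p$ whose cell meets $E$ automatically satisfies $d(x,p) < \delta$. Hence one finite family of cones $\{K_j\}$ centered at $x$ suffices, and the annular wedges are simply $(E_{(c_2-\frac{1}{2})\delta} \setminus E_{c_1\delta}) \cap K_j$. The subtlety the paper then faces is that $p$ need not lie in the same cone $K_j$ as $z$; it resolves this by projecting $p$ onto the line through $x$ and $z$. Your approach instead places the apex at the generator $p$ itself, which eliminates the projection trick, at the cost of introducing the finite cover $\{A_i\}$ and the perturbation-robust wedges $R_{i,j}$. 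The paper's version is shorter; yours is more modular and makes the uniformity over $p$ explicit rather than hiding it in the choice of scale.

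Two small points to tidy. First, you only establish $d(x_i,p) < 3\epsilon$, so $p$ lies in $B(x_i,3\epsilon)$ and not necessarily in $A_i = B(x_i,\epsilon)$; the robustness requirement on $R_{i,j}$ should be stated for all $p \in B(x_i,3\epsilon)$, which costs nothing. Second, when you invoke Proposition~\ref{euclideanlemma} with apex $p$, the role of the proposition's ``$p$'' is played by the apex itself, which does not literally lie in the open cone $K$. This is harmless: apply the proposition to the point $p' = p + \epsilon'(z-p)/\|z-p\| \in K$ with $\epsilon' < \delta$, obtaining $d(z,q) < d(z,p') = d(z,p) - \epsilon' < d(z,p)$, which is the desired strict inequality.
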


\begin{proof}
Let  $\delta =  2 \textnormal{diam}\ E =   2 \sup_{x,y \in E} d(x,y)$ be twice the diameter of $E$. For $L > 0$, consider the $L$-neighbourhood of $E$ 
$$ E_L =  \{ x \in X \ | \ d(x, E) < L \}. $$
First of all, if $E$ has vanishing measure, we can replace it without loss of generality by some $E_L$, which has nonempty interior. 

We claim that $B = E_{32\delta}$ is as desired. To see that, consider an arbitrary $x \in E$ and let $\{K_j \}_j$ be a finite minimal set of open cones centered at $x$ of solid angle $\frac{\pi}{12}$ whose closures cover $\mathbb{R}^n$. As $m$ tends to $\infty$, since $\rho$ has support in the whole $\mathbb{R}^n$, by the law of large numbers the probability of the following tends to $1$:
\begin{itemize}
    \item $P$ intersects $E$ (recall that $E$ has non-vanishing measure),
    \item for every $j$, $P$ intersects $(E_{(c_2 - \frac{1}{2} )\delta} \setminus E_{c_1\delta}) \cap K_j $, where $c_1, c_2$ are the constants from Proposition \ref{euclideanlemma}.
    
\end{itemize}

To prove our claim, we can thus conditionally assume the above. Consider now a Voronoi cell intersecting $E$ and suppose by contradiction that $z$ is an element of the cell not contained in $B$. Let $q \in P$ be a generator in $(E_{(c_2- \frac{1}{2})\delta} \setminus E_{c_1\delta}) \cap K_j $ where $K_j$ is the cone containing $z$.  Since $P$ intersects $E$, the generator $p$ of the cell lies in $E_{\textnormal{diam}(E)} = E_{\frac{\delta}{2}}$ and consequently $d(x,p) < \delta$. If $p \not \in K_j$, then one can replace it with its orthogonal projection on the line passing through $x$ and $z$. The hypotheses of Proposition \ref{euclideanlemma} are then satisfied and we conclude that $d(z,q) < d(z,p)$. This is absurd since $p$ is the generator of $C(z)$.  
 \end{proof}

For a bounded measurable set $E \subseteq \mathbb{R}^n$, denote by 
$$D_E = \max_{\substack{p \in P \\ C(p) \cap E \not = \emptyset}} \textnormal{diam} \ C(p)$$
the maximum diameter of a Voronoi cell intersecting $E$.

\begin{prop}\label{diameters}
$D_E$, thought as a random variable in $P$, converges in probability to $0$ as $m = |P|$ tends to $\infty$.
\end{prop}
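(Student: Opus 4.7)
My plan is to prove convergence in probability directly: fix $\varepsilon > 0$ and bound $\mathbb{P}(D_E > \varepsilon)$ by the sum of two tail events. First, I would invoke Proposition~\ref{shelling} to produce a bounded measurable set $B \supseteq E$ such that, with probability tending to $1$, every Voronoi cell intersecting $E$ is contained in $B$. This immediately reduces the problem to controlling cell diameters inside a compact region, which is a much more tractable setting than the original $\mathbb{R}^n$.

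Next, since $B$ is bounded I would choose a finite open cover $\{ U_i \}_{i=1}^N$ of $B$ by balls of radius $\varepsilon/4$. Because $\rho$ has full support on $\mathbb{R}^n$, each $\rho(U_i) > 0$, so by the law of large numbers applied to each $U_i$ separately, combined with a union bound over the finite index set $i = 1, \dots, N$, the probability that every $U_i$ contains at least one generator from $P$ tends to $1$ as $m \to \infty$.

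On the intersection of these two high-probability events, I claim that every cell $C(p)$ intersecting $E$ satisfies $\diam C(p) \leq \varepsilon$. Indeed, for any $y \in C(p) \subseteq B$ one has $y \in U_i$ for some $i$, and $U_i$ contains some generator $q \in P$ with $d(y,q) < \varepsilon/2$. Since $p$ is the generator closest to $y$, this forces $d(y,p) < \varepsilon/2$. Applying this bound to two arbitrary points of $C(p)$ and using the triangle inequality through $p$ yields the diameter bound, hence $D_E \leq \varepsilon$. Combining with the two high-probability events gives $\mathbb{P}(D_E > \varepsilon) \to 0$.

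The main obstacle is really already absorbed by Proposition~\ref{shelling}: without an a priori bounded region containing all relevant cells one cannot pass to a finite cover, and the full-support hypothesis alone would not rule out very elongated unbounded cells reaching into $E$ from far away. Once containment in $B$ is secured, what remains is a clean combination of finite covering, positivity of $\rho$ on each cover element, and the law of large numbers, none of which requires any continuity or regularity of $\rho$.
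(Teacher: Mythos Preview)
Your argument is correct. Once Proposition~\ref{shelling} confines every relevant cell to a fixed bounded set $B$, the covering step works exactly as you describe: if each ball $U_i$ of radius $\varepsilon/4$ meets $P$, then for any $y\in C(p)\subseteq B$ there is some $q\in P$ with $d(y,q)<\varepsilon/2$, whence $d(y,p)\le d(y,q)<\varepsilon/2$ and $\diam C(p)\le\varepsilon$. The two bad events (failure of containment in $B$, or some $U_i$ missing $P$) each have probability tending to $0$, the latter simply because $\mathbb{P}(P\cap U_i=\emptyset)=(1-\mathbb{P}(U_i))^m\to 0$ and the cover is finite.

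The paper proceeds differently. Rather than a covering of $B$, it fixes a finite family of cones $\{K_j\}$ of aperture $\pi/12$ and bounds $\diam C(p)\le c\max_j R_{p,j}$, where $R_{p,j}$ is the distance from $p$ to its nearest neighbour in the translated cone $p+K_j$; this cone bound again comes from Proposition~\ref{euclideanlemma}. It then writes $\mathbb{P}_{P\sim\rho^m}(D_E>\varepsilon)$ as an integral over the location of a generator, obtains the conditional probability $(1-M(x))^{m-1}$ that a fixed cone-ball around $x$ contains no other generator, invokes Proposition~\ref{shelling} to restrict the integral to $B$, and finally uses that $M(x)$ is bounded below on $B$ to conclude $m(1-M(x))^{m-1}\to 0$.

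Your route is more elementary: it avoids the second use of the cone geometry and the integral computation entirely, reducing everything to a finite union bound. The paper's route, following \cite{devroye}, is more in the spirit of classical nearest-neighbour analysis and is structured so that one could in principle extract quantitative rates from the bound $m(1-M(x))^{m-1}$; your compactness argument, while cleaner for the qualitative statement, gives no immediate handle on the speed of convergence. Both rely on Proposition~\ref{shelling} and on the full-support hypothesis in the same essential way.
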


\begin{proof}
The proof is inspired by Theorem~4 in \cite{devroye}. Consider a finite minimal set of open cones $\{K_j \}_j $ centered at $0$ of solid angle $\frac{\pi}{12}$ whose closures cover $\mathbb{R}^n$. Then there is a constant $c > 0$ such that for each $p \in P$
$$ \textnormal{diam} \ C(p) \leq c  \max_j R_{p,j}$$

where $R_{p,j} = \min_{q \in P \cap (p + K_j)} d(p,q) $ denotes the distance from $p$ to its closest neighbour in the cone $K_j$ centered in $p$ (and $R_{p, j} = \infty$ if $P \cap (p + K_j)=  \emptyset$). This follows from Proposition \ref{euclideanlemma} applied with $x = p$ to all the cones centered at the generators, with an opportune $\delta$ for each of them. For each $\varepsilon >0$ we thus have an inclusion of events 

$$\{ D_E > \varepsilon \} \subseteq \left\{ \max_{ \substack{p,j \\ C(p) \cap E \not = \emptyset}} R_{p,j} > \frac{\varepsilon}{c} \right\} \subseteq \bigcup_{i,j} \left\{  P \cap (p_i + K_j) \cap B\left( p_i, \frac{\varepsilon}{c} \right)  = \emptyset  \ \textnormal{and} \ C(p_i)\cap E \not = \emptyset \right\} $$
where $B(x, r)$ is the open ball centered in $x$ of radius $r$. In the above, we assumed that the set $P$ is equipped with an ordering. For $x \in \mathbb{R}^n$ denote by $E_{x,j}$ the event appearing at the right member of the above expression for $x=p_i$. We can then bound the probability with respect to a random $P \sim \rho^m$, with $m = |P|$ fixed, as
$$\mathbb{P}_{P \sim \rho^m}( D_E > \varepsilon  ) \leq \sum_{i,j} \mathbb{P}_{P \sim \rho^m}( E_{p_i,j} ) = m \sum_j \int_{\mathbb{R}^n} \rho(x) \mathbb{P}_{P \sim \rho^m }( E_{x,j} \ | \ p_1 = x) \ \textnormal{d}x.$$

Since the points in $P$ are sampled independently we have

$$\mathbb{P}_{P \sim \rho^m }(E_{x,j} \ | \ p_1 = x, \ C(x) \cap E \not = \emptyset) = \left( 1 - \mathbb{P}\left( (x + K_j) \cap B\left( x, \frac{\varepsilon}{c} \right) \right)  \right)^{m-1} := (1-M(x))^{m-1}.$$

Pick the set $B$ guaranteed by Proposition \ref{shelling}. We can then conditionally assume that every Voronoi cell intersecting $E$ is contained in $B$, which implies $\mathbb{P}_{P \sim \rho^m }(E_{x,j}) = 0 $ for $x \not \in B$. The limit we wish to estimate reduces to 
$$\lim_{m \rightarrow \infty}  m \sum_j \int_{\mathbb{R}^n} \rho(x) \mathbb{P}_{P \sim \rho^m }( E_{x,j} \ | \ p_1 = x) \ \textnormal{d}x = \sum_j \lim_{m \rightarrow \infty} \int_{B} \rho(x) m (1 - M(x))^{m-1} \ \textnormal{d}x.$$

 Since $B$ is bounded and $\rho$ has support in the whole $\mathbb{R}^n$, $M(x)$ is (essentially) bounded from below by a strictly positive constant as $x$ varies in $B$. The limit can thus be brought under the integral and putting everything together we get:    
$$\lim_{m \rightarrow \infty} \mathbb{P}_{P \sim \rho^m}( D_E > \varepsilon  ) \leq \sum_j  \int_{B} \rho(x) \lim_{m \rightarrow \infty} m (1 - M(x))^{m-1} \ \textnormal{d}x = 0. $$
\end{proof}

We are now ready to prove Theorem~\ref{convergence}.
\begin{proof}
 By the Portmanteau Lemma (\cite{van2000asymptotic}), it is sufficient to that $ \mathbb{P}_m(E)$ converges to $\mathbb{P}(E)$ in probability for any bounded measurable set $E \subseteq \mathbb{R}^n$ which is a continuity set for $\mathbb{P}$ i.e., $\mathbb{P}(\partial E) = 0$ where $\partial E$ is the (topological) boundary of $E$. Pick such $E$. By definition of the CVDE, for a fixed set $P$ of generators we have that 
 
 \begin{align}
  \label{bigsum}
  \begin{split}
  \mathbb{P}_m(E) &= \frac{1}{m}| \{ p \in P \ | \  C(p) \subseteq E \} | +
  \overbrace{\frac{1}{m} \sum_{ \substack{p \in P \\ C(p) \not \subseteq E  \\ C(p) \cap E \not = \emptyset  }   }  \frac{\textnormal{Vol}_p(C(p) \cap E)}{\textnormal{Vol}_p(C(p))}}^{\overline{R}} \\
      &= \frac{1}{m} | P \cap E | + \overline{R} - \frac{1}{m}| \{ p \in P \cap E \ | \ C(p) \not \subseteq E  \} |. 
  \end{split}
  \end{align}
Since the Voronoi cells are closed, any cell intersecting $E$ not contained in $E$ intersects $\partial E$. Thus  $\left| \overline{R} - \frac{1}{m}| \{ p \in P \cap E \ | \ C(p) \not \subseteq E  \} | \right| \leq 2R$ where $R :=\frac{1}{m}|\{ p \in P \ | \ C(p) \cap \partial E \not = \emptyset \}|$. Now, the random variable $\frac{1}{m} | P \cap E |$  tends to $\mathbb{P}(E)$ in probability as $m$ tends to $\infty$ by the law of large numbers. In order to conclude, we need to show that $R$ tends to $0$ in probability. 


Fix $\varepsilon > 0$. For $L>0$, consider the $L$-neighbour $\partial E_L =  \{ x \in X \ | \ d(x, \partial E) < L \}$ of the boundary $\partial E$. If the diameter of the Voronoi cells intersecting $\partial E$ is less than $L$ then all such cells are contained in $\partial E_L$. Thus:

\begin{align}
\label{finaleq}
 \begin{split}
 \mathbb{P}_{P \sim \rho^m}\left( R > \varepsilon \right) &\leq \mathbb{P}_{P \sim \rho^m}\left( \frac{1}{m} | P \cap \partial E_L| > \varepsilon\ \textnormal{and} \ D_{\partial E} < L  \right) + \mathbb{P}_{P \sim \rho^m}\left( D_{\partial E} \geq  L  \right) \\
  &\leq \mathbb{P}_{P \sim \rho^m}\left( \frac{1}{m} | P \cap \partial E_L| > \varepsilon \right) +\mathbb{P}_{P \sim \rho^m}\left( D_{\partial E} \geq  L  \right) \\
  & \leq \mathbb{P}_{P \sim \rho^m}\left( \left| \mathbb{P}(\partial E_L) - \frac{1}{m} | P \cap \partial E_L| \right| > \varepsilon - \mathbb{P}(\partial E_L) \right) +\mathbb{P}_{P \sim \rho^m}\left( D_{\partial E} \geq  L  \right).
 \end{split}
\end{align}

Since $\partial E$ is closed, $\partial E = \cap_{L > 0} \partial E_L$ and thus $\lim_{L \rightarrow 0} \mathbb{P}( \partial E_L  ) = \mathbb{P}(\cap_L \partial E_L) = \mathbb{P}(\partial E) = 0$ since $E$ is a continuity set. This implies that there is an $L$ such that $\varepsilon >  \mathbb{P}(\partial E_L)$. The right hand side of Equation \ref{finaleq} tends then to $0$ by the law of large numbers and Proposition \ref{diameters}, which concludes the proof.

\end{proof}

\end{document}